\documentclass[letterpaper, 10 pt, conference]{ieeeconf}

\newif\ifachproofinappendix
\achproofinappendixtrue

\IEEEoverridecommandlockouts
\usepackage[cmex10]{amsmath}
\usepackage{cite}

\title{\LARGE \bf
Rate-cost Trade-offs in $H_\infty$ Control with Initial State Uncertainty
}

\usepackage{hyperref}
\usepackage{bm}
\usepackage{bbm}
\usepackage{url}

\usepackage{amsthm}
\usepackage{amscd}
\usepackage{amssymb}

\usepackage{mathtools}
\usepackage{mathrsfs}

\usepackage{tikz}
\usepackage{xcolor}
\usepackage{graphicx}
\usepackage{svg}

\usepackage{nicefrac}
\usepackage{microtype}

\usepackage{caption}
\usepackage{subcaption}

\usepackage{times}
\usepackage{xspace}
\usepackage{tabularx}

\usepackage[capitalize,noabbrev]{cleveref}

\usepackage{booktabs}

\newcommand{\ic}{x_0 \in \Omega_0}

\newcommand{\lb}{\alpha}
\newcommand{\ub}{\beta}

\newtheorem{theorem}{Theorem}[section]

\newtheorem{definition}[theorem]{Definition}

\newtheorem{problem}[theorem]{Problem}

\newtheorem{remark}[theorem]{Remark}

\newcommand{\ie}{\textit{i.e.}}

\newcommand{\vol}{\text{Vol}}

\newcommand{\rcf}{R({\gamma})}

\newcommand{\beq}[1]{\begin{equation}\label{eq:#1}}
\newcommand{\eeq}{\end{equation}}

\newcommand{\defeq}{\coloneqq}

\newcommand{\inn}{\!\in\!}

\newcommand{\norm}[2][\text{}]{\Vert{#2}\Vert_{#1}}

\newcommand{\Z}{\mathbb{Z}}

\newcommand{\R}{\mathbb{R}}

\newcommand{\F}{\mathcal{F}}

\newcommand{\Hinf}{\mathit{H}_\infty}

\newcommand{\Bg}{B_{\gamma}}

\author{Vikrant Malik, Victoria Kostina and Babak Hassibi\thanks{Vikrant Malik, Victoria Kostina
and Babak Hassibi are with the Department of Electrical Engineering, California Institute of
Technology,
        Pasadena, California
        {\tt\small \{vmalik, vkostina, hassibi\}@caltech.edu}}
\thanks{\copyright\ 2026 IEEE. Personal use of this material is permitted.}}

\begin{document}

\maketitle
\thispagestyle{empty}
\pagestyle{empty}

\begin{abstract}
We consider the problem of $\Hinf$ control in the presence of a digital communication channel
between the observer and the controller and investigate the fundamental trade-off between the
required communication data rate and $\Hinf$ performance.
The system follows linear dynamics and has no additive internal noise, so that the uncertainty is
confined to the initial state. For the scalar system, using a deterministic time-zero
covering argument, we establish a lower bound on the required data rate for a given $\Hinf$ cost
and demonstrate why logarithmic quantization (Elia, 2000) is a natural choice.
We further develop an achievability scheme and show that, for the scalar system, its rate
matches the converse asymptotically in the high-data-rate limit.
\end{abstract}

\section{Introduction}

\subsection{Literature Context}
Control systems under communication constraints are ubiquitous in modern technologies, including
distributed robotics, networked autonomous systems, and smart grids. These systems tackle the dual
challenges of maintaining precise control while operating under communication uncertainties, such
as bandwidth restrictions or latency. For example, drones rely on real-time sensor-controller
communication, where a rate-constrained or delayed data transfer can severely impact safety and
performance. Similarly, industrial automation systems often operate within shared communication
networks, requiring robust mechanisms to ensure stability and efficiency despite channel
limitations. This interplay between control and communication highlights the need to understand how
communication constraints impact control objectives at a fundamental level.

The study of control under communication constraints has a well-established history, with results
exploring the trade-offs between control performance and required communication characteristics.
The first works that explored the connection between control theory and communication constraints
appeared in \cite{baillieul1999feedback, wong1999systems}. These works introduced the concept of
stabilizability in scalar systems under a rate constraint between the observer and the controller.
They showed that stabilization requires a minimum data rate exceeding \( \log_2 |A| \) bits per
sample, where \( A \) represents the system parameter. These results were later extended to vector
systems in \cite{tatikonda2004control}, where the authors showed that
\begin{align}
    R > \sum_{{|\lambda_i(A)|}> 1} \log_2 \left|\lambda_i(A)\right|
    \label{eq:drt}
\end{align}
is a necessary condition on the communication rate to achieve stabilizability. These results were
extended by \cite{nair2007feedback}, who used a volume division argument to show that the same
condition on rate holds in the unbounded noise setting.
However, fixed-rate quantizers cannot account for arbitrarily large noise realizations, prompting
the development of adaptive quantization schemes that dynamically adjust quantization intervals
based on system state \cite{nair2004stabilizability, brockett2000quantized, yuksel2013jointly,
kostina2021exact}. Due to this, variable rate \cite{silva2010framework, kostina2019rate}, and even
infinite rate quantizers \cite{elia2001stabilization, elia2000design, fu2005sector} have also been
explored. In \cite{elia2001stabilization}, the authors showed that the optimal density of an
infinite rate quantizer to quadratically stabilize a system is logarithmic. High density
quantization is another active field of research \cite{lookabaugh1989high, korten2007high,
linder1999high}. In this context, companders have been used to generate high-density quantizers.
For example, in \cite{linder1999high}, the authors study a companding scheme for entropy-coded
vector quantization. The core idea behind a compander is to uniformly quantize a compressed version
of a stochastic source. However, that framework is formulated for stochastic sources
and average distortion, and therefore does not directly provide the deterministic worst-case
guarantee pursued here.

Linear quadratic Gaussian (LQG) control with communication constraints was studied in
\cite{tatikonda2004stochastic, charalambous2008lqg, stavrou2021sequential, kostina2019rate}, where
the minimum achievable LQG cost was linked to the Gaussian causal rate-distortion function
\cite{gorbunov1974prognostic} and directed mutual information constraints.
Furthermore, \cite{han2024coded} investigated squared stability of linear systems over vector
Gaussian channels.
Stabilization results for nonlinear systems have also been obtained \cite{Nair2004, Liberzon2005}.
The authors in \cite{Nair2004} introduced the concept of topological feedback entropy to quantify
the fundamental data rate required for the stabilization of nonlinear plants. Similarly,
\cite{Liberzon2005} demonstrated how input-to-state stability (ISS) properties can be leveraged to
counteract state estimation errors and achieve global asymptotic stabilization of nonlinear systems
with limited information.

The $\Hinf$ cost penalizes the worst-case amplification of disturbances by a system, capturing how
large the output can get relative to the most adverse bounded input. Targeting a specific $\Hinf$
cost is harder than mere stabilizability considered in \cite{tatikonda2004control,
nair2004stabilizability, nair2007feedback, brockett2000quantized}.
The first work that addressed quantized $\Hinf$ control \cite{elia2000design} looked at the problem
setting in the infinite data rate regime and showed that the optimal structure for the quantizer is
one with logarithmic density. The authors consider the problem with the additional constraint that
the state is quadratically stabilizable. Their work was extended in \cite{fu2005sector} where a
certain sector bound approach is used to show that the optimal quantizer is logarithmic. A
finite-level logarithmic quantizer with dynamic scaling was proposed in \cite{fu2009finite},
achieving asymptotic stabilization with a moderate number of quantization levels.
A related but distinct problem was studied in \cite{niu2014control}, where an adaptive dynamic
quantizer is designed to preserve a predesigned stochastic $\Hinf$ disturbance-attenuation level
over a quantized channel with packet losses.
\subsection{Contributions}
In this paper, we elucidate the fundamental trade-off between the data rate and the achievable
$\Hinf$ cost in control of a linear system.
Unlike \cite{tatikonda2004control, nair2004stabilizability, nair2007feedback,
brockett2000quantized}, we focus on the $\Hinf$ cost, and unlike \cite{elia2000design,
fu2005sector}, we consider finite rate quantization. We study this problem in the noiseless setting
where the only uncertainty is the initial state.
Our converse uses a deterministic time-zero covering argument
in $z=\log|x|$. Unlike entropy methods, it needs no source law and unlike
invariant-set or volume-growth methods, it obtains a $\gamma$-dependent
rate constraint without propagating uncertainty. This converse motivates
a vector achievability scheme whose scalar specialization attains its
leading high-rate term.
Together, our achievability and converse results yield a refinement of the data-rate theorem
\eqref{eq:drt}, in which the objective is not merely to ensure stability, but to guarantee a
prescribed $\Hinf$ gain. This parallels the approach of \cite{kostina2019rate}, which refines the
same theorem by requiring a target LQG cost rather than simply the absence of instability. Our
refinement, however, is restricted to the scalar noiseless setting.
Finally, numerical experiments illustrate the theoretical bounds and show improved performance
over the comparison schemes in the scalar example considered.

\subsection{Paper Organization}
The remainder of this paper is organized as follows: Section \ref{sec:prelim} contains the
notations used in the paper. We also touch upon the problem of $\Hinf$ control without a
communication constraint to better set up the rest of the paper. In Section \ref{sec:converse}, we
provide a lower bound on the minimum data rate needed to achieve a certain $\Hinf$ performance
using the argument developed there. In Section \ref{sec:ach}, we introduce an
achievability scheme motivated by the converse. Section \ref{sec:num} presents numerical
simulations that validate the theoretical results.

\section{Preliminaries} \label{sec:prelim}
\subsection{Notations}
We write $\R$ for the real numbers and $\Z_+$ for the positive
integers.  For a matrix $A$, $A^*$ denotes its transpose.  The symbol
$\norm{\cdot}$ denotes the Euclidean norm.
The unit sphere is $S^{d-1}\defeq\{v\in\R^d:\norm{v}=1\}$,
$\angle(v,w)\in[0,\pi]$ is the angle between nonzero vectors, and
$[a]_+\defeq\max\{a,0\}$.

\subsection{Dynamical System}
Consider a discrete-time, linear time-invariant (LTI) dynamical system expressed as a state-space
model given by:
\begin{align}
   x_{t+1} &= A x_{t} + B_u u_{t} , \quad \ic. \label{eq:system}
\end{align}
Here, $x_{t} \inn \R^{d}$ is the \emph{state}, $u_t \inn \R^{d_u}$ is the \emph{control input}. The
state-space parameters $(A,B_u)$ are known and stabilizable. The initial state $x_0$ is unknown to
the controller but is known to be in a set $\Omega_0 \subseteq \R^{d}$ where,
\begin{align}
    \Omega_0 \defeq \{x \in \R^d \mid \lb \leq \norm{x} \leq \ub\},
    \qquad 0<\lb<\ub.
\end{align}
We assume that $A$ is diagonalizable with real eigenvalues,
and choose $\Phi_A$ such that
$\Phi_AA\Phi_A^{-1}=\operatorname{diag}(\lambda_1(A),\ldots,\lambda_d(A))$.

\subsection{Communication Channel}
\label{sec:comms}
The communication channel is a noiseless bit pipe with a finite alphabet
$\Sigma$ and rate $R\defeq\log_2|\Sigma|$ bits per sample. At each time $t$,
the observer transmits a symbol $\sigma_t\in\Sigma$, which the controller
receives without corruption. The symbol $\sigma_t$ is generated via a causal mechanism that
depends on the state observations up to time $t$, \ie, $x_0, \dots, x_t$.

\subsection{\texorpdfstring{$\Hinf$}{text} control objective (Noiseless Case)}
Let $\Pi_0,Q,W\succ0$ and $P_f\succeq0$ be fixed weighting
matrices. For a horizon $N$, define the weighted state-and-control cost
\begin{align}
J_N^{P_f}(x_0)
\defeq x_{N+1}^*P_fx_{N+1}
+\sum_{t=0}^N\left(u_t^*Qu_t+x_t^*Wx_t\right).
\end{align}
The $\Hinf$ control problem under a communication constraint is
then formulated as follows.
\begin{problem}\label{prob:quant_hinf}
Given $\gamma>0$ and channel rate $R$, find one causal coding-and-control strategy
$\{f_t,g_t\}_{t\geq0}$, where
$\sigma_t=f_t(x_0,\ldots,x_t)$ and
$u_t=g_t(\sigma_0,\ldots,\sigma_t)$, such that
\begin{align}
    \limsup_{N\to\infty}\ \sup_{x_0\in\Omega_0}
    \frac{J_N^{P_f}(x_0)}{x_0^*\Pi_0^{-1}x_0}
    <\gamma^2.
    \label{eq:prob}
\end{align}
\end{problem}
The ratio in~\eqref{eq:prob} is the energy gain from the uncertain
initial state to the weighted state-and-control trajectory.
Thus one horizon-independent strategy
must meet an asymptotic worst-case induced-$\ell_2$ gain bound. This is a
noiseless initial-condition specialization of the standard
control-theoretic $H^\infty$ disturbance-attenuation problem
\cite{francis_hinf_1987,doyle_state-space_1988,khargonekar_transients_1991,
farhood_uncertain_ic_2008,farhood_lpv_ic_2021}.
To gain an insight into Problem~\ref{prob:quant_hinf}, we first
look at the problem without any communication constraint. Let $P\succeq0$
be the stabilizing solution of the algebraic Riccati equation, and define
$R_e\defeq Q+B_u^*PB_u$ and $K\defeq R_e^{-1}B_u^*PA$, so that
\begin{align}
    P&=A^*PA+W-K^*R_eK.
    \label{eq:stationary_riccati}
\end{align}
On rearranging the cost in~\eqref{eq:prob} and using a
completion of squares argument, we can simplify this expression. Using
\cite[Th.~9.2.1]{blackbook}, we have, for every $N$,
\begin{align}
    J_N^{P_f}(x_0)
    &=x_0^*Px_0+\sum_{t=0}^{N}
    (u_t+Kx_t)^*R_e(u_t+Kx_t)\nonumber\\
    &\quad+x_{N+1}^*(P_f-P)x_{N+1}.
    \label{eq:stationary_cost_identity}
\end{align}
Let $F\defeq A-B_uK$ and $C_K^*C_K=W+K^*QK$. Since
$P=F^*PF+C_K^*C_K$, Parseval's identity gives the frequency-domain form
\begin{align*}
    x_0^*Px_0
    =\frac{1}{2\pi}\int_{-\pi}^{\pi}
    \left\|C_K(I-e^{-\jmath\omega}F)^{-1}x_0\right\|^2\,d\omega.
\end{align*}
Thus, this is initial-condition transient energy.
Define
\begin{align}
    M&\defeq K^*R_eK,
    &H_\gamma&\defeq\gamma^2\Pi_0^{-1}-P.
    \label{eq:stationary_weights}
\end{align}
Feasibility and $W,Q\succ0$ imply $x,u\in\ell_2$ for each
$x_0\in\Omega_0$; hence
$x_{N+1}\to0$, the terminal term in
\eqref{eq:stationary_cost_identity} vanishes, and its nonnegative partial
sum converges. Taking $N\to\infty$, every feasible strategy therefore
satisfies
\begin{align}
    \sum_{t=0}^{\infty}(u_t+Kx_t)^*R_e(u_t+Kx_t)
    <x_0^*H_\gamma x_0,
    \qquad \forall x_0\in\Omega_0.
    \label{eq:J_scalar}
\end{align}
The controller $u_t=-Kx_t$ is referred to as the central
controller. Throughout our discussion, we assume that $H_\gamma\succ0$.
Equation~\eqref{eq:J_scalar} says that the cumulative deviation from the
central controller is bounded above by $x_0^*H_\gamma x_0$. Since $\Omega_0$
is bounded, the cumulative deviation is bounded. We can now ask the
question, how much rate is required to achieve the $\Hinf$ cost $\gamma$.

\begin{definition}
Consider Problem~\ref{prob:quant_hinf}. Under a data rate
constraint between the observer and controller, the control inputs are a
function of the received symbols $\sigma_t$ defined in
Section~\ref{sec:comms}. A valid control strategy
$u_t=\F_t(\sigma_0,\sigma_1,\ldots,\sigma_t)$ for all $t\geq0$ is one
that guarantees~\eqref{eq:prob}. The set of all such control strategies for
a given rate $R$ is denoted by $\Gamma_{\gamma,R}$.
\end{definition}

\begin{definition}
The $\Hinf$ rate-cost function of the system
\eqref{eq:system} is defined as the minimum rate $\rcf$ such that
$\Gamma_{\gamma,\rcf}$ is a non-empty set. If no such rate exists, we set
$\rcf=+\infty$.
\label{def:rcf}
\end{definition}

\section{Converse} \label{sec:converse}

In this section, we provide a fundamental lower bound on the rate-cost
function in Definition~\ref{def:rcf} for a scalar system using a
modified volume-division argument
\cite{tatikonda2004control}. We show that the required quantization error depends on
the magnitude of the initial state, so the corresponding quantization
regions do not have constant Euclidean length.  We account for this state
dependence using a logarithmic quantizer, under which all admissible
relative-error regions have the same length.  The main result is stated
below.

\begin{theorem}\label{th:main}
Suppose that \(d=d_u=1\), \(0<\lb<\ub\), and \(K\neq0\).  Define
\begin{align}
    \Bg
    &\defeq
    \frac{\gamma^2\Pi_0^{-1}-P}{K^2R_e}.
    \label{eq:def_Bgamma_converse}
\end{align}
If \(0<\Bg<1\), then
the rate-cost function satisfies
the lower bound
\begin{align}
    \rcf
    &>
    \log_2\left(
    \frac{\log(\ub/\lb)}
    {\operatorname{tanh}^{-1}\left(\sqrt{\Bg}\right)}
    \right).
    \label{eq:main_th_c}
\end{align}
\end{theorem}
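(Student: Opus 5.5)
The plan is to use only the time-zero term of the cumulative-deviation bound \eqref{eq:J_scalar}. In the scalar case, for every admissible strategy and every $x_0\in\Omega_0$, each summand in \eqref{eq:J_scalar} is nonnegative. Dropping all terms with $t\ge1$ therefore gives
\begin{align*}
R_e\,(u_0+Kx_0)^2 < (\gamma^2\Pi_0^{-1}-P)\,x_0^2 .
\end{align*}
Dividing by $K^2R_e>0$, and using that $H_\gamma>0$ is assumed so $\Bg>0$, this becomes
\begin{align*}
\left|x_0-c\right|<\sqrt{\Bg}\,|x_0|,\qquad c\defeq -u_0/K .
\end{align*}
Since $u_0=g_0(\sigma_0)$ and $\sigma_0=f_0(x_0)$, the control $u_0$ takes at most $|\Sigma|=2^R$ distinct values. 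Hence $\Omega_0$ is covered by at most $2^R$ sets
\begin{align*}
S(c)\defeq\{x:\ |x-c|<\sqrt{\Bg}\,|x|\},
\end{align*}
one for each symbol. This is the deterministic time-zero covering.

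Next I will describe $S(c)$ explicitly, using the assumption $\Bg<1$. If $c=0$, the set is empty. If $c>0$, the condition forces $x>0$, because $x\le0$ would give $|x-c|\ge|x|>\sqrt{\Bg}|x|$. Solving $|x-c|<\sqrt{\Bg}\,x$ then yields the open interval
\begin{align*}
\left(\frac{c}{1+\sqrt{\Bg}},\ \frac{c}{1-\sqrt{\Bg}}\right).
\end{align*}
The case $c<0$ is symmetric. In the variable $z=\log|x|$, each nonempty $S(c)$ therefore becomes an open interval lying in a single sign half-line. Its length is
\begin{align*}
\log\frac{1+\sqrt{\Bg}}{1-\sqrt{\Bg}}=2\tanh^{-1}\bigl(\sqrt{\Bg}\bigr),
\end{align*}
which is independent of $c$. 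This is exactly why the logarithmic coordinate is the natural one.

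For the counting step, $\Omega_0$ consists of two components, $[\lb,\ub]$ and $[-\ub,-\lb]$. Each component maps to the closed interval $[\log\lb,\log\ub]$ of length $\log(\ub/\lb)$, and each covering set meets only one component. So the total $z$-length to be covered is $2\log(\ub/\lb)$, and each set contributes at most $2\tanh^{-1}(\sqrt{\Bg})$. This gives
\begin{align*}
2^R\cdot 2\tanh^{-1}\bigl(\sqrt{\Bg}\bigr)\ \ge\ 2\log(\ub/\lb).
\end{align*}

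The delicate step is making this inequality strict, which is what yields the strict bound \eqref{eq:main_th_c}. I expect to get strictness from openness. A closed interval of length $\ell$ covered by finitely many open intervals of lengths $L_i$ must satisfy $\sum_i L_i>\ell$. Indeed, if $\sum_i L_i\le\ell$, the union of the open intervals has measure at most $\ell$; either it misses a set of positive measure, or, by connectedness of the closed interval, an endpoint or a junction point is left uncovered. Applying this on each component gives $2^R\tanh^{-1}(\sqrt{\Bg})>\log(\ub/\lb)$. Since this holds for every rate $R$ with $\Gamma_{\gamma,R}\neq\emptyset$, taking $R=\rcf$ gives \eqref{eq:main_th_c}.

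I also need to confirm that the strict inequality in \eqref{eq:J_scalar} really holds pointwise for each $x_0$, which the preliminaries already establish. Finally, I should note where $\Bg<1$ is used: it keeps each $S(c)$ bounded and confined to one sign, and without it the $z$-intervals would be unbounded and the covering argument would give no constraint.
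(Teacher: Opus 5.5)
Your proposal is correct and follows essentially the same route as the paper. Both arguments keep only the time-zero term of the cumulative-deviation bound, use the control-induced points $-u_0/K$, and note that each nonempty cell is confined to one sign and has length $2\tanh^{-1}(\sqrt{B_\gamma})$ in $z=\log|x|$; both then count covering cells separately on each sign component. The only cosmetic difference is how strictness is obtained: the paper uses compactness (the open union contains an $\varepsilon$-enlargement of $[\log\alpha,\log\beta]$) together with an integer floor, whereas your openness/endpoint argument gives the same strict inequality.
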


\begin{proof}
Fix any valid strategy with alphabet \(\Sigma\).  At time \(t=0\), the
controller has received one symbol and can therefore select at most
\(|\Sigma|\) distinct control values. For every such value \(u_0\), define
an associated point
\begin{align}
    \widehat{x}_0\defeq-\frac{u_0}{K}.
    \label{eq:control_induced_point}
\end{align}
Here, \(\widehat{x}_0\) is only an auxiliary point induced by \(u_0\), not
a state reconstruction.
For every initial state
\(x_0\in\Omega_0=[-\ub,-\lb]\cup[\lb,\ub]\) whose transmitted symbol
induces the control value \(u_0\), considering only the \(t=0\) term on
the left-hand side of \eqref{eq:J_scalar} gives the necessary condition
\begin{align}
    K^2R_e(x_0-\widehat{x}_0)^2
    <
    x_0^2\left(\gamma^2\Pi_0^{-1}-P\right).
\end{align}
Using \eqref{eq:def_Bgamma_converse}, this becomes
\begin{align}
    |x_0-\widehat{x}_0|<\sqrt{\Bg}|x_0|,
    \qquad \text{for every such }x_0.
    \label{eq:tilde_error_first}
\end{align}
Thus, every valid strategy induces a finite set of auxiliary points that satisfy
\eqref{eq:tilde_error_first}.
This relative-error geometry is also related to the
Weber--Fechner law, according to which the logarithmic scale is optimal
for maximal relative error \cite{portugal2011weber}.

We can now calculate the number of control-induced points ($\widehat{x}_0$) required to
cover \(\Omega_0\) using a volume-division argument.
Define the logarithmic volume of a measurable set
\(E\subset\mathbb{R}\setminus\{0\}\) by
\begin{align}
    \vol_{\log}(E)
    \defeq
    \int_E\frac{dx}{|x|}.
    \label{eq:log_volume}
\end{align}
On either sign component, \eqref{eq:log_volume} is ordinary
one-dimensional volume after the change of coordinates
\(z=\log|x|\).  In particular,
\begin{align}
    \vol_{\log}([\lb,\ub])
    =
    \vol_{\log}([-\ub,-\lb])
    =
    \log\left(\frac{\ub}{\lb}\right).
    \label{eq:initial_log_volume}
\end{align}
For a given \(\widehat{x}_0\), define its admissible
region as
\begin{align}
    \mathcal C(\widehat{x}_0)
    \defeq
    \left\{x\in\mathbb{R}\setminus\{0\}:
    |x-\widehat{x}_0|<\sqrt{\Bg}|x|\right\}.
    \label{eq:admissible_cell}
\end{align}
Because \(0<\Bg<1\), every point in \(\mathcal C(\widehat{x}_0)\) has the
same sign as \(\widehat{x}_0\), and
\(\mathcal C(0)=\varnothing\).  Consider first \(\widehat{x}_0>0\).
For \(x>0\), \eqref{eq:tilde_error_first} is equivalent to
\begin{align}
    \frac{\widehat{x}_0}{1+\sqrt{\Bg}}
    <
    x
    <
    \frac{\widehat{x}_0}{1-\sqrt{\Bg}}.
    \label{eq:positive_relative_cell}
\end{align}
This implies that every nonempty admissible region has the same logarithmic volume,
\begin{align}
    \vol_{\log}\bigl(\mathcal C(\widehat{x}_0)\bigr)
    &=
    \int_{\frac{|\widehat{x}_0|}{1+\sqrt{\Bg}}}^{
          \frac{|\widehat{x}_0|}{1-\sqrt{\Bg}}}
    \frac{dx}{x} = \log \left(\frac{1+\sqrt{\Bg}}{1-\sqrt{\Bg}}\right) \nonumber\\
    &=
    2\operatorname{tanh}^{-1}\left(\sqrt{\Bg}\right),
    \qquad \widehat{x}_0\neq0.
    \label{eq:relative_cell_log_volume}
\end{align}

Now, consider a valid strategy $u_t \in \Gamma_{\gamma,R}$ and let \(N_+\) denote its number of
distinct
positive control-induced points, written as
\(\widehat{x}_{0,1},\ldots,\widehat{x}_{0,N_+}\).  Their respective regions
\(\mathcal C(\widehat{x}_{0,i})\) must cover \([\lb,\ub]\).
After the change of
coordinates \(z=\log x\), their union is an open set containing the compact
interval \([\log\lb,\log\ub]\). Hence it contains
\([\log\lb-\varepsilon,\log\ub+\varepsilon]\) for some
\(\varepsilon>0\).
Subadditivity and \eqref{eq:relative_cell_log_volume} give
\begin{align}
    2N_+\!\operatorname{tanh}^{-1}\left(\sqrt{\Bg}\right)
    &=
    \sum_{j=1}^{N_+}
    \vol_{\log}\bigl(\mathcal C(\widehat{x}_{0,j})\bigr)
    \nonumber\\
    &\geq
    \vol_{\log}\left(
    \bigcup_{j=1}^{N_+}\mathcal C(\widehat{x}_{0,j})\right)
    >
    \log\left(\frac{\ub}{\lb}\right).
    \label{eq:positive_count}
\end{align}
Applying the same volume-division argument to the negative component
\([-\ub,-\lb]\), using \(z=\log|x|\), gives
\begin{align}
    2N_-\!\operatorname{tanh}^{-1}\left(\sqrt{\Bg}\right)
    >
    \log\left(\frac{\ub}{\lb}\right),
    \label{eq:negative_count}
\end{align}
where \(N_-\) is the number of distinct negative control-induced points.
Note that no control-induced point can serve both components, and the zero point
serves neither.
Since \(N_+\) and \(N_-\) are integers,
\eqref{eq:positive_count}--\eqref{eq:negative_count} imply
\begin{align}
    |\Sigma|
    \geq N_++N_-
    &\geq
    2\left(
    \left\lfloor
    \frac{\log(\ub/\lb)}
    {2\operatorname{tanh}^{-1}\left(\sqrt{\Bg}\right)}
    \right\rfloor+1
    \right) \nonumber\\
    &> \frac{\log(\ub/\lb)}
    {\operatorname{tanh}^{-1}\left(\sqrt{\Bg}\right)}
\end{align}
For every valid rate \(R\),
\(|\Sigma|=2^R\), so taking logarithms on both sides yields \eqref{eq:main_th_c}.
\end{proof}

\begin{remark}
When \(|A|>1\), combining \eqref{eq:main_th_c} with the stabilizability lower bound gives
\begin{align}
    \rcf
    >
    \max\left\{
    \log_2\left(
    \frac{\log(\ub/\lb)}
    {\operatorname{tanh}^{-1}\left(\sqrt{\Bg}\right)}
    \right),
    \log_2|A|
    \right\}.
    \label{eq:combined_converse}
\end{align}
The first term in \eqref{eq:combined_converse} is the
performance-dependent rate penalty established in Theorem~\ref{th:main}, and
the second is the classical rate required for stabilization.  Thus the
first term is the active constraint when a stringent \(H_\infty\)
performance is required, whereas \(\log_2|A|\) is the rate floor when the
performance requirement is sufficiently relaxed.
\end{remark}
To make the high-rate behavior explicit, let $\gamma_0 \defeq\sqrt{\Pi_0P}$
denote the minimum achievable \(\gamma\) under full information
\cite{blackbook}.
Then
\begin{align}
    \Bg
    =
    \frac{\gamma^2-\gamma_0^2}
    {\Pi_0K^2R_e}.
    \label{eq:Bg_performance_gap}
\end{align}
Here, the high-data-rate regime means the fixed-system limit
$\gamma\downarrow\gamma_0$, equivalently $\Bg\downarrow0$, for which the
required rate diverges.
As \(\gamma\downarrow\gamma_0\), the performance-dependent term
in~\eqref{eq:combined_converse} scales as
\(-\frac{1}{2}\log_2{(\gamma^2-\gamma_0^2)}+O(1)\); hence the required
rate diverges.
Conversely, as \(\gamma \rightarrow \infty\),
\(\Bg\) increases and the performance-dependent term decreases until the
stabilization term becomes active.  Equation~\eqref{eq:combined_converse}
is stated in the regime \(0<\Bg<1\) of Theorem~\ref{th:main}. Independently,
the stabilization bound
\begin{align}
    \rcf>\log_2|A|
    \label{eq:loga}
\end{align}
continues to hold for every feasible \(\gamma\).

\section{Achievability} \label{sec:ach}

In this section, we propose a logarithmic quantization scheme for
Problem~\ref{prob:quant_hinf} that, in the scalar high-rate regime, attains
the leading term of the converse bound~\eqref{eq:main_th_c}.  The
construction is stated for the vector system.
\begin{definition}[Time-zero polar logarithmic quantizer]
\label{def:log}
Let $L\defeq|\Sigma|=2^R$.  Choose $N_r\in\Z_+$ radial layers
and an angular codebook
$\mathcal V=\{v_1,\ldots,v_{N_a}\}\subset S^{d-1}$.  For $d\geq2$, choose
$0<\Theta<\pi/2$ such that $\mathcal V$ is a geodesic $\Theta$-cover:
\begin{align}
    \min_{1\leq j\leq N_a}\angle(u,v_j)\leq\Theta,
    \qquad u\in S^{d-1}.
\end{align}
For $d=1$, take $\Theta=0$ and
$\mathcal V=S^0=\{-1,+1\}$, so that $N_a=2$. Let $T\defeq\log(\ub/\lb)$.  The radial component is a
logarithmic
quantizer with boundaries
\begin{align}
    \rho_i\defeq\lb\left(\frac{\ub}{\lb}\right)^{i/N_r},
    \qquad i=0,\ldots,N_r.
\end{align}
The radial shells are $[\rho_{i-1},\rho_i)$ for $i<N_r$ and
$[\rho_{N_r-1},\rho_{N_r}]$ for $i=N_r$.  Given $x\in\Omega_0$, the
encoder selects its radial shell and a $v_j$ satisfying the covering
condition, with a fixed rule for resolving ties, and transmits $s=(i,j)$.
Let $C_s$ denote the set of states assigned to this symbol. Define
\begin{align}
    \delta_0^2
    &\defeq \sin^2(\Theta)
    +\tanh^2\left(\frac{T}{2N_r}\right)\cos^2(\Theta).
    \label{eq:polar_delta}
\end{align}
The reconstruction radius for shell $i$ is
\begin{align}
    \bar\rho_i
    \defeq\sqrt{1-\delta_0^2}\sqrt{\rho_{i-1}\rho_i},
    \qquad i=1,\ldots,N_r.
\end{align}
Upon receiving $s=(i,j)$, the decoder reconstructs
\begin{align}
    \widehat{x}_s
    \defeq \bar\rho_i v_j.
    \label{eq:polar_reconstruction}
\end{align}
This reconstruction guarantees
\begin{align}
    \norm{x-\widehat{x}_s}\leq\delta_0\norm{x},
    \qquad x\in C_s.
    \label{eq:polar_error}
\end{align}
Appendix~\ref{ap:polar_quantizer} derives~\eqref{eq:polar_error} and the
decoded uncertainty ball used below. The quantizer uses $N_rN_a$ symbols and is feasible whenever
\begin{align}
    N_rN_a\leq L=2^R.
\end{align}
\end{definition}

After the first time step, we switch to a uniform quantizer for
a simpler recursive analysis.  For each transformed coordinate, choose an integer
$L_i\geq1$ such that
\begin{align}
    \prod_{i=1}^d L_i\leq L.
    \label{eq:primitive_rate_allocation}
\end{align}
To this end, we use the primitive quantizer defined
in~\cite{tatikonda2004control}, adapted to the present notation.
Without loss of generality, scale the rows $\phi_i^*$ of $\Phi_A$ so that
$\norm{\phi_i}=1$.

\begin{definition}[Primitive quantizer]
\label{def:primitive}
A primitive quantizer is the four-tuple
$(\overline L,\overline h,\Phi_A,p)$, where
$p\in\R^d$ is the center,
$\overline L=(L_1,\ldots,L_d)^*\in\Z_+^d$ is the vector of cell counts,
$\overline h=(h_1,\ldots,h_d)^*$, with $h_i\geq0$, specifies the
transformed halfwidths, and $\Phi_A$ is the coordinate transformation.
\end{definition}

At time $t$, the primitive quantizer partitions the support
\begin{align}
    \Lambda_t
    \defeq
    \left\{x\in\R^d:
    \left|\phi_i^*(x-p_t)\right|\leq h_{t,i},
    i=1,\ldots,d\right\},
    \label{eq:primitive_support}
\end{align}
where $\phi_i^*$ is the $i$th row of $\Phi_A$.  Transformed coordinate
$i$ is divided into $L_i$ intervals of length $2h_{t,i}/L_i$.  Upon
observing $x_t$, the quantizer:
\begin{enumerate}
    \item subtracts $p_t$ from $x_t$;
    \item applies the coordinate transformation $\Phi_A$;
    \item identifies the box containing $\Phi_A(x_t-p_t)$, using a fixed
    priority rule for boundary points; and
    \item transmits the symbol representing that box.
\end{enumerate}
If $h_{t,i}=0$, that transformed coordinate is interpreted as a singleton.
If $\widehat z_t$ is the midpoint of the selected transformed box, the
decoder reconstructs
\begin{align}
    \widehat x_t=p_t+\Phi_A^{-1}\widehat z_t.
    \label{eq:primitive_reconstruction}
\end{align}
Consequently, whenever $x_t\in\Lambda_t$,
\begin{align}
    \left|\phi_i^*(x_t-\widehat x_t)\right|
    \leq\frac{h_{t,i}}{L_i},
    \qquad i=1,\ldots,d.
    \label{eq:primitive_error}
\end{align}
Our construction will ensure that the state never leaves the
support.
Define the contraction factors
\begin{align}
    \theta_i&\defeq\frac{|\lambda_i(A)|}{L_i},
    &\theta&\defeq\max_i\theta_i.
    \label{eq:tracker_contraction}
\end{align}
\begin{theorem}\label{th:achievability}
Suppose that the time-zero quantizer in Definition~\ref{def:log} uses
$N_rN_a\leq L=2^R$ symbols and the subsequent primitive quantizers use cell
counts satisfying~\eqref{eq:primitive_rate_allocation}.  With $M$, $\delta_0$,
and $\theta$ defined in~\eqref{eq:stationary_weights},~\eqref{eq:polar_delta},
and~\eqref{eq:tracker_contraction}, respectively, if $\theta<1$ and
\begin{align}
    \delta_0^2\left[
    \norm{M}+
    \frac{d\norm{M}\norm{\Phi_A^{-1}}^2\theta^2}
    {(1-\delta_0)^2(1-\theta^2)}
    \right]
    <\lambda_{\min}(H_\gamma),
    \label{eq:rate_main}
\end{align}
then the quantization scheme and stationary controller
described in Appendix~\ref{ap:achievability_proof} satisfy
Problem~\ref{prob:quant_hinf}, uniformly over $\Omega_0$, for every fixed
$P_f\succeq0$.  In the special case $P_f=P$, the same bound holds at every
finite horizon $N$.
\end{theorem}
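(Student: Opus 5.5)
The plan is to use the completion-of-squares identity \eqref{eq:stationary_cost_identity} and reduce feasibility to bounding the cumulative deviation $\sum_t (u_t+Kx_t)^*R_e(u_t+Kx_t)$ strictly below $x_0^*H_\gamma x_0$, uniformly over $\Omega_0$. The controller will be the certainty-equivalent one, $u_t=-K\widehat x_t$, where $\widehat x_t$ is the decoder's current state estimate. With this choice $u_t+Kx_t=K(x_t-\widehat x_t)$, so the $t$-th term equals $e_t^*Me_t\le\norm{M}\norm{e_t}^2$, where $e_t\defeq x_t-\widehat x_t$. It therefore suffices to show
\begin{align*}
\sum_{t\ge0}\norm{e_t}^2\le\delta_0^2\norm{x_0}^2\left[1+\frac{d\norm{\Phi_A^{-1}}^2\theta^2}{(1-\delta_0)^2(1-\theta^2)}\right],
\end{align*}
because \eqref{eq:rate_main} together with $x_0^*H_\gamma x_0\ge\lambda_{\min}(H_\gamma)\norm{x_0}^2$ then gives the strict inequality.

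At $t=0$ the polar logarithmic quantizer of Definition~\ref{def:log} yields $\norm{e_0}\le\delta_0\norm{x_0}$ by \eqref{eq:polar_error}. For $t\ge1$ the encoder and decoder both know $u_t$, so they propagate the predicted center $p_{t+1}=A\widehat x_t+B_uu_t$, which gives $x_{t+1}-p_{t+1}=Ae_t$. The initial support is built from the decoded uncertainty ball of Appendix~\ref{ap:polar_quantizer}. Since $\norm{x_0}\le\norm{\widehat x_0}/(1-\delta_0)$, the error satisfies $\norm{e_0}\le\delta_0\norm{\widehat x_0}/(1-\delta_0)\defeq r_0$, and this quantity is computable at the decoder. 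Because $\norm{\phi_i}=1$, each transformed coordinate of $e_0$ is bounded by $r_0$. Setting $h_{1,i}=|\lambda_i|r_0$ and using $\phi_i^*A=\lambda_i\phi_i^*$ places $x_1$ inside $\Lambda_1$.

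By induction, \eqref{eq:primitive_error} gives $|\phi_i^*e_t|\le h_{t,i}/L_i$. We then set $h_{t+1,i}=|\lambda_i|h_{t,i}/L_i=\theta_ih_{t,i}$, which guarantees that the state never leaves the support. This yields $|\phi_i^*e_t|\le\theta_i^t r_0\le\theta^t r_0$ for $t\ge1$. Hence $\norm{e_t}\le\norm{\Phi_A^{-1}}\sqrt d\,\theta^t r_0$. Summing the geometric series over $t\ge1$ gives $d\norm{\Phi_A^{-1}}^2r_0^2\theta^2/(1-\theta^2)$. Substituting $r_0\le\delta_0\norm{x_0}/(1-\delta_0)$ produces exactly the second term of \eqref{eq:rate_main}. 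I will check that the $\norm{\widehat x_0}$-to-$\norm{x_0}$ conversion indeed carries the factor $(1-\delta_0)^{-2}$ and not something looser.

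Next come the limit and the horizon issues. Since $\theta<1$, the errors decay geometrically, so $e_t\to0$. The closed loop is $x_{t+1}=Fx_t+B_uKe_t$ with $F$ stable, so $x_t\to0$ and the term $x_{N+1}^*(P_f-P)x_{N+1}$ vanishes. Consequently the $\limsup$ in \eqref{eq:prob} equals $x_0^*Px_0$ plus the full deviation sum. Dividing by $x_0^*\Pi_0^{-1}x_0$ shows that the ratio is below $\gamma^2$. To obtain uniformity over $\Omega_0$, I need a uniform margin, not just a pointwise one. Everything scales with $\norm{x_0}^2$ and $\Omega_0$ is compact, so the ratio is bounded by a constant strictly less than $\gamma^2$. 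The terminal term is bounded by $\norm{P_f-P}\norm{x_{N+1}}^2$, and $\norm{x_{N+1}}^2\le c\,\rho^N\ub^2$ uniformly in $x_0$. For $P_f=P$ this term is absent, and the partial sums are monotone, so the bound holds for every finite $N$.

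The main obstacle is the handoff between the polar cell and the box support. The support must be decoder-computable and must contain $Ae_0$, while $r_0$ is still controlled by $\norm{x_0}$ and not by $\ub$. If the bound $\norm{x_0}\le\norm{\widehat x_0}/(1-\delta_0)$ is not immediate, I will derive it from \eqref{eq:polar_error} via the triangle inequality, $\norm{\widehat x_0}\ge(1-\delta_0)\norm{x_0}$.
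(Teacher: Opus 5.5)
\textbf{There is a genuine gap, at the step you flagged.} You center the first support at $A\widehat x_0+B_uu_0$ with radius $r_0=\delta_0\norm{\widehat x_0}/(1-\delta_0)$, and then substitute $r_0\le\delta_0\norm{x_0}/(1-\delta_0)$. That substitution needs $\norm{\widehat x_0}\le\norm{x_0}$, which is false.

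The triangle inequality only gives $\norm{\widehat x_0}\le(1+\delta_0)\norm{x_0}$, and this is attained. For example, when $d=1$ and $x_0=\rho_{i-1}$, one has $\bar\rho_i=2\rho_{i-1}\rho_i/(\rho_{i-1}+\rho_i)=(1+\delta_0)\rho_{i-1}$. Your argument therefore only proves the theorem when the second term of \eqref{eq:rate_main} is multiplied by $(1+\delta_0)^2$.

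This is not just slack in the estimate; your scheme fails the theorem's conclusion. Take $d=1$, $A=4$, $L=L_1=6$ (so $N_r=3$) and $\beta/\alpha=1.5^3$. Then $\delta_0=0.2$ and $\theta=2/3$. At $x_0=\alpha$ your scheme gives:
\begin{itemize}
\item[(i)] $e_0=-0.2\alpha$ and $r_0=0.3\alpha$;
\item[(ii)] $x_1-p_1=Ae_0=-0.8\alpha$, which lies exactly on a cell boundary of the support $[-1.2\alpha,1.2\alpha]$;
\item[(iii)] thereafter every $Ae_t$ lands on the edge of the next support, so $|e_t|=\theta^t r_0$ for all $t\ge1$.
\end{itemize}
Summing gives $\sum_t e_t^*Me_t=0.112\norm{M}\alpha^2$. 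The left side of \eqref{eq:rate_main}, however, equals $0.09\norm{M}$. So with $\lambda_{\min}(H_\gamma)=0.1\norm{M}$ the hypothesis \eqref{eq:rate_main} holds, yet your scheme violates the necessary condition \eqref{eq:J_scalar} at $x_0=\alpha$.

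\textbf{The fix.} Hand off from the decoded ball \eqref{eq:polar_decoded_ball} itself, not from a ball centered at $\widehat x_0$. Take $c_s=\widehat x_s/(1-\delta_0^2)$ and $r_s=\delta_0\norm{c_s}$, and set $p_1=Ac_s+B_uu_0$ and $h_{1,i}=|\lambda_i(A)|r_s$, while still applying $u_0=-K\widehat x_0$. Then $x_1-p_1=A(x_0-c_s)$ stays in the support. Moreover, $\norm{x_0}\ge\norm{c_s}-r_s=(1-\delta_0)\norm{c_s}$ gives $r_s\le\delta_0\norm{x_0}/(1-\delta_0)$, which is exactly the factor you need.

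Your ball is the smallest ball about $\widehat x_0$ that contains the decoded ball. Its radius is $(1+\delta_0)r_s$, which is precisely the loss. With this initialization, the rest of your plan coincides with the paper's proof: the $e_0$ bound, the tracking recursion and geometric tail, the uniform margin, the terminal-term decay, and the $P_f=P$ case.
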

\long\gdef\AchievabilityProof{
\begin{proof}
Consider the following observer and controller design.
\begin{enumerate}
    \item At time $t=0$, apply the logarithmic quantizer in
    Definition~\ref{def:log} directly to $x_0$.  From the transmitted
    symbol $s$, both sides determine $\widehat x_s$.  To initialize the
    subsequent tracker, define
    $c_s\defeq\widehat{x}_s/(1-\delta_0^2)$ and
    $r_s\defeq\delta_0\norm{c_s}$.  By~\eqref{eq:polar_decoded_ball},
    $C_s\subseteq\{x:\norm{x-c_s}\leq r_s\}$.  Set
    $\widehat x_0=\widehat x_s$ and apply $u_0=-K\widehat x_0$.

    \item Choose the first primitive quantizer with center and transformed
    halfwidths
    \begin{align}
        p_1&=Ac_s+B_uu_0,
        &h_{1,i}&=|\lambda_i(A)|r_s.
        \label{eq:primitive_initialization}
    \end{align}

    \item For each $t\geq1$, the encoder sends the index of the box
    containing $\Phi_A(x_t-p_t)$.  If $\widehat z_t$ is the
    corresponding midpoint offset, the decoder reconstructs
    $\widehat x_t=p_t+\Phi_A^{-1}\widehat z_t$ according
    to~\eqref{eq:primitive_reconstruction} and applies $u_t=-K\widehat x_t$.

    \item Update
    \begin{align}
        p_{t+1}&=A\widehat x_t+B_uu_t,
        &h_{t+1,i}&=\theta_i h_{t,i},
        \label{eq:primitive_update}
    \end{align}
    and repeat Step~3.
\end{enumerate}
The quantizer parameters depend only on the received symbols,
past controls, and known system matrices, so the encoder and decoder compute
them synchronously.

Let $\Lambda_t$ denote the support~\eqref{eq:primitive_support} with center
$p_t$.  We first verify that the quantizers do not overload.  From
\eqref{eq:polar_decoded_ball}
and~\eqref{eq:primitive_initialization},
\begin{align}
    \left|\phi_i^*(x_1-p_1)\right|
    &=\left|\lambda_i(A)\phi_i^*(x_0-c_s)\right|\nonumber\\
    &\leq |\lambda_i(A)|r_s=h_{1,i},
\end{align}
so $x_1\in\Lambda_1$.  If $x_t\in\Lambda_t$, then
\eqref{eq:primitive_error} and~\eqref{eq:primitive_update} give
\begin{align}
    \left|\phi_i^*(x_{t+1}-p_{t+1})\right|
    &=\left|\lambda_i(A)\phi_i^*
    (x_t-\widehat x_t)\right|\nonumber\\
    &\leq\frac{|\lambda_i(A)|}{L_i}h_{t,i}
    =h_{t+1,i}.
\end{align}
Thus $x_{t+1}\in\Lambda_{t+1}$, and induction proves the no-overload claim.
Define $e_t\defeq x_t-\widehat x_t$.
The support recursion and
\eqref{eq:primitive_error} imply, for $t\geq1$,
\begin{align}
    \left|\phi_i^*e_t\right|
    \leq\theta_i^t r_s.
    \label{eq:tracking_error_coordinate}
\end{align}
It follows that
\begin{align}
    \sum_{t=1}^{\infty}e_t^*Me_t
    &\leq\norm{M}\sum_{t=1}^{\infty}\norm{e_t}^2\nonumber\\
    &\leq d\norm{M}\norm{\Phi_A^{-1}}^2r_s^2
    \sum_{t=1}^{\infty}\theta^{2t}\nonumber\\
    &=\frac{d\norm{M}\norm{\Phi_A^{-1}}^2\theta^2}
    {1-\theta^2}r_s^2.
    \label{eq:tracking_error_tail}
\end{align}
Moreover,~\eqref{eq:polar_error} gives
\begin{align}
    e_0^*Me_0
    \leq\norm{M}\delta_0^2\norm{x_0}^2.
    \label{eq:initial_error_cost}
\end{align}
Since $x_0\in\{x:\norm{x-c_s}\leq r_s\}$ and
$r_s=\delta_0\norm{c_s}$ by~\eqref{eq:polar_ball_parameters},
\begin{align}
    r_s\leq\frac{\delta_0}{1-\delta_0}\norm{x_0}.
    \label{eq:decoded_radius_bound}
\end{align}
Combining~\eqref{eq:tracking_error_tail}--\eqref{eq:decoded_radius_bound},
we obtain
\begin{align}
    \sum_{t=0}^{\infty}e_t^*Me_t
    \leq\delta_0^2\left[
    \norm{M}+
    \frac{d\norm{M}\norm{\Phi_A^{-1}}^2\theta^2}
    {(1-\delta_0)^2(1-\theta^2)}
    \right]\norm{x_0}^2.
    \label{eq:total_error_cost}
\end{align}

Requiring the coefficient of $\norm{x_0}^2$ in~\eqref{eq:total_error_cost}
to be strictly smaller than $\lambda_{\min}(H_\gamma)$ gives the sufficient
condition~\eqref{eq:rate_main}. Under this condition, the coefficient is at most
$\lambda_{\min}(H_\gamma)-\eta$ for some $\eta>0$.  Hence, uniformly over
$x_0\in\Omega_0$,
\begin{align}
    x_0^*Px_0+\sum_{t=0}^{\infty}e_t^*Me_t
    &\leq\gamma^2x_0^*\Pi_0^{-1}x_0
    -\eta\norm{x_0}^2.
    \label{eq:stationary_cost_margin}
\end{align}
Moreover, the state under the stationary controller satisfies
\begin{align}
    x_{t+1}=(A-B_uK)x_t+B_uKe_t.
\end{align}
Since $P$ is the stabilizing Riccati solution, $A-B_uK$ is Schur.  Hence
there are $C_F<\infty$ and $\rho\in(0,1)$ such that
$\norm{(A-B_uK)^t}\leq C_F\rho^t$.  The geometric error bounds above also
give $\norm{e_t}\leq C_e\theta^t\norm{x_0}$ for some $C_e<\infty$.
Consequently,
\begin{align}
    \frac{\norm{x_t}}{\norm{x_0}}
    &\leq C_F\rho^t+C_F\norm{B_uK}C_e
    \sum_{j=0}^{t-1}\rho^{t-1-j}\theta^j
    \longrightarrow0,
    \label{eq:stationary_state_decay}
\end{align}
uniformly over $x_0\in\Omega_0$.  Thus the terminal mismatch in
\eqref{eq:stationary_cost_identity} satisfies
\begin{align}
    \sup_{x_0\in\Omega_0}
    \frac{\left|x_{N+1}^*(P_f-P)x_{N+1}\right|}
    {x_0^*\Pi_0^{-1}x_0}
    \longrightarrow0.
    \label{eq:terminal_mismatch_decay}
\end{align}
Since $u_t+Kx_t=Ke_t$, combining~\eqref{eq:stationary_cost_identity},
\eqref{eq:stationary_cost_margin}, and
\eqref{eq:terminal_mismatch_decay} gives
\begin{align}
    \limsup_{N\to\infty}\ \sup_{x_0\in\Omega_0}
    \frac{J_N^{P_f}(x_0)}{x_0^*\Pi_0^{-1}x_0}
    &\leq\gamma^2-
    \frac{\eta}{\lambda_{\max}(\Pi_0^{-1})}
    <\gamma^2.
    \label{eq:stationary_asymptotic_performance}
\end{align}
In the particular case $P_f=P$, the terminal mismatch is identically zero,
and~\eqref{eq:stationary_cost_margin} shows that the desired bound holds for
every finite $N$.
\end{proof}
}
\ifachproofinappendix
\emph{Proof sketch.}
We use Definition~\ref{def:log} at $t=0$, then recursively update the
primitive quantizer of Definition~\ref{def:primitive} to prevent overload
and ensure geometric error decay. With $u_t=-K\widehat x_t$, stability and
error decay eliminate the terminal mismatch uniformly (see
Appendix~\ref{ap:achievability_proof}).
\else
\AchievabilityProof
\fi

\begin{remark}[Limiting rate regimes]
\label{rem:rate_allocation}
Under the continuous relaxation $R_i=\log_2L_i$, the optimal allocation for
$\theta\in(0,1)$ satisfies
\begin{align}
    \sum_{i=1}^d
    \left[\log_2\frac{|\lambda_i(A)|}{\theta}\right]_+=R.
    \label{eq:tracking_rate_allocation}
\end{align}
As $\gamma\to\infty$,~\eqref{eq:rate_main} becomes inactive.  Letting
$\theta\uparrow1$ then recovers the data-rate condition
\begin{align}
    R>\sum_{i=1}^d\left[\log_2|\lambda_i(A)|\right]_+,
    \label{eq:stabilization_recovered}
\end{align}
which is~\eqref{eq:drt}.
In the high-performance limit $\lambda_{\min}(H_\gamma)\downarrow0$,
Appendix~\ref{ap:high_rate_allocation} gives, for $\norm{M}>0$, the sufficient rate threshold
\begin{align}
    R_{\rm ach}(\gamma)
    &=\frac d2\log_2
    \frac{\norm{M}}{\lambda_{\min}(H_\gamma)}+O(1).
    \label{eq:vector_high_rate_scaling}
\end{align}
For $d>1$, this is an achievable scaling; for $d=1$, it matches the converse
to leading order, as shown next.
\end{remark}
See Appendix~\ref{ap:high_rate_allocation} for full details.
\begin{remark}[Scalar specialization]\label{rem:scalar_ach}
Suppose $d=1$, $K\neq0$, and $\Phi_A=1$, and use $\Bg$ from
\eqref{eq:def_Bgamma_converse}.
For an even alphabet size $L$, use $N_a=2$, $N_r=L/2$ at time zero and
$L_1=L$ thereafter, and set
\begin{align}
    \delta_L&\defeq\tanh\left(\frac{\log(\ub/\lb)}{L}\right),
    &\theta_L&\defeq\frac{|A|}{L}.
    \label{eq:scalar_delta_theta}
\end{align}
With $R=\log_2L$, Theorem~\ref{th:achievability} reduces to
\begin{align}
    L&>|A|,\nonumber\\
    \delta_L^2\left[
    1+\frac{\theta_L^2}
    {(1-\delta_L)^2(1-\theta_L^2)}
    \right]&<\Bg.
    \label{eq:ach_scalar}
\end{align}
Let $L_{\rm ach}$ be the smallest even integer satisfying~\eqref{eq:ach_scalar}.
As $\Bg\downarrow0$, $L_{\rm ach}\to\infty$, so $\theta_{L_{\rm ach}}\to0$.
Using $\delta_L$ from~\eqref{eq:scalar_delta_theta} and $\tanh z=z+o(z)$,
we solve~\eqref{eq:ach_scalar} for $L$ and take logarithms to obtain
\begin{align}
    \log_2L_{\rm ach}
    &=\log_2\left(\frac{\log(\ub/\lb)}{\sqrt{\Bg}}\right)+o(1).
    \label{eq:ach_high_rate}
\end{align}
Rounding up to the next even integer changes the rate by only $o(1)$.
Hence the scalar logarithmic scheme is first-order rate-optimal in the
high-rate limit.
\end{remark}

\section{Simulations} \label{sec:num}
We consider the scalar plant $x_{t+1}=4x_t+u_t$ with
$\Omega_0=[-2,-0.1]\cup[0.1,2]$, $Q=\Pi_0=0.1$, and $W=1$.
The controller uses the stationary gain $K$ and
terminal weight $P_f=P$.  We sample even alphabet sizes $L\geq6$ and report
the rate $\log_2L$ bits per sample.
The proposed and uniform worst-case gains at $N=50$ are computed exactly;
for Fu--Xie, we maximize
$\sqrt{J_{50}^{P}(x_0)/(x_0^*\Pi_0^{-1}x_0)}$ over a dense grid in
$\Omega_0$.

Figure~\ref{fig:num_simulation} compares the proposed scheme with a
finite-level logarithmic quantizer with dynamic scaling adapted from
\cite{fu2009finite}, whose logarithmic spacing, zoom factors,
and initial scale are tuned numerically subject to a sufficient stability
condition, and
a uniform-initialization/primitive-tracking scheme adapted from
\cite{tatikonda2004control}.
The proposed curve closely follows the sufficient bound.
Appendix~\ref{ap:dynamic_range_sweep} reports an additional
sweep over the initial-set dynamic range.

\begin{figure}[htbp]
    \centering
    \includegraphics[width=0.95\linewidth]{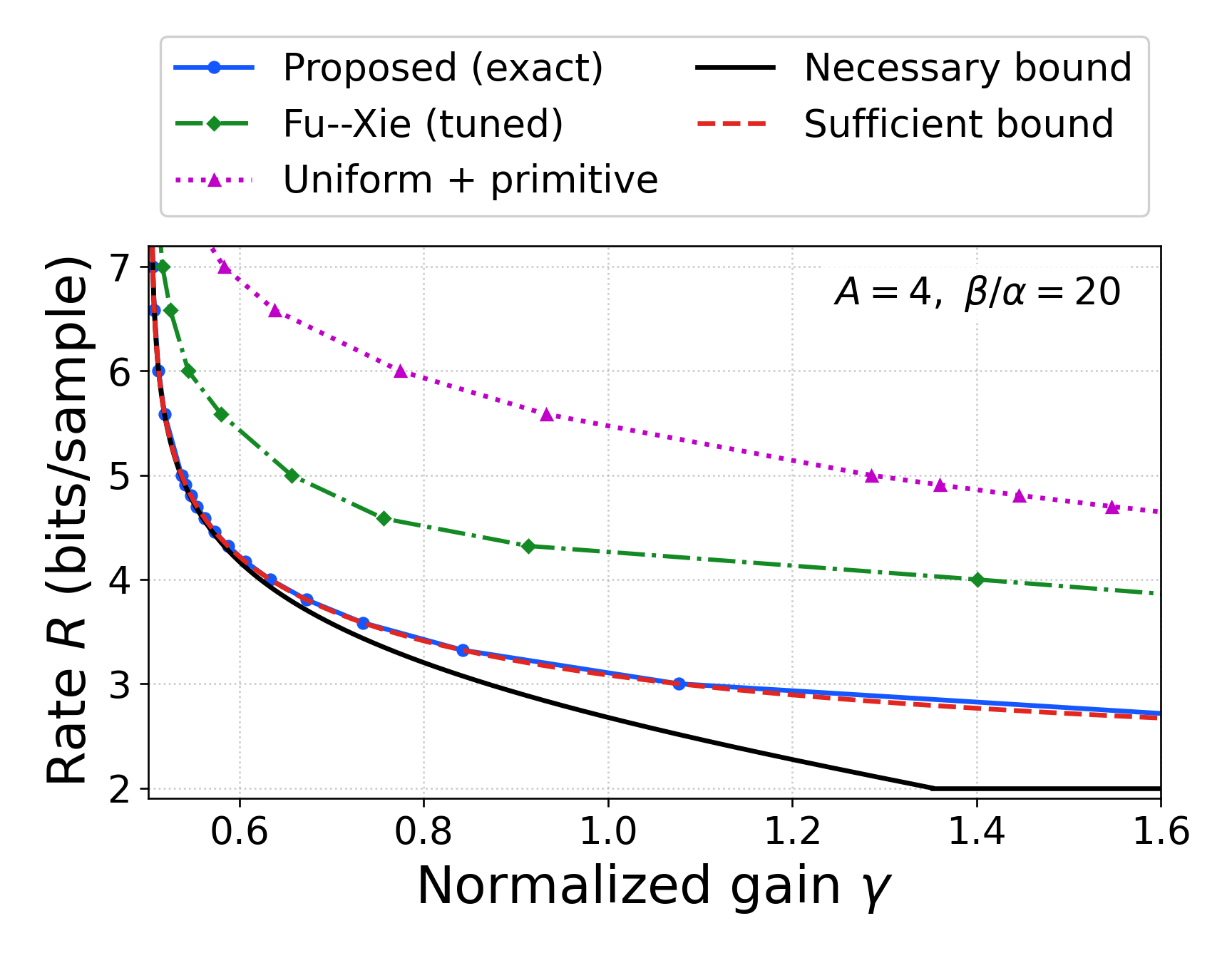}
    \caption{Rate versus normalized gain for the stationary scalar example
    with $A=4$ and $\beta/\alpha=20$.
    Markers show worst-case gains at $N=50$, computed exactly
    for the proposed and uniform schemes and estimated by sampling for
    Fu--Xie. The black curve is the combined
    necessary bound
    in~\eqref{eq:combined_converse}; the dashed red curve is the continuous
    relaxation of the sufficient bound in~\eqref{eq:ach_scalar}.  Each marker
    uses an actual alphabet of size $L$ and is plotted at rate $\log_2L$.}
    \label{fig:num_simulation}
\end{figure}

\section{Conclusion}
This paper analyzes $\Hinf$ control under a finite data rate communication constraint in the
noiseless setting. We establish a lower bound on the required data rate for achieving a specified
$\Hinf$ performance in the scalar case using the time-zero covering argument.
The achievability scheme we developed using a logarithmic quantizer
matches the scalar converse asymptotically in the high-data-rate
limit.
The scalar converse uses only the time-zero cost and may be loose
at moderate rates, while the vector achievability bound uses worst-case
matrix norms and may be conservative for anisotropic systems. The analysis
also assumes exact state observations, a noiseless plant and channel, a
bounded annular initial-state set, and real-diagonalizable $A$.
For vector systems, a matching converse would require
anisotropic radial--angular covering bounds, suggesting a logarithmic
radial quantizer paired with a near-uniform angular quantizer. Process or
measurement noise would also require uncertainty propagation or joint
estimation and quantization.

\bibliographystyle{IEEEtran}

\clearpage
\section*{Appendix}
\appendices

\section{Derivation of the Time-Zero Polar Quantizer}
\label{ap:polar_quantizer}
The radial quantizer is logarithmic.  Thus, with
$h\defeq T/(2N_r)$, adjacent radial boundaries satisfy
\begin{align}
    \frac{\rho_i}{\rho_{i-1}}=e^{2h}.
\end{align}
Fix a symbol $s=(i,j)$ and a point $x\in C_s$.  Write
\begin{align}
    x=\rho u,
    \qquad
    \rho=\norm{x}\in[\rho_{i-1},\rho_i],
    \qquad
    u\in S^{d-1}.
\end{align}
The encoder represents the direction $u$ by $v_j$, where
\begin{align}
    \vartheta\defeq\angle(u,v_j)\leq\Theta.
\end{align}
Let $\bar\rho_i>0$ denote the reconstruction radius for shell $i$.
Upon receiving $s=(i,j)$, the decoder reconstructs
\begin{align}
    \widehat{x}_s\defeq\bar\rho_i v_j.
\end{align}
We now choose $\bar\rho_i$ to minimize the largest relative error in the
cell.  For $x=\rho u$, the squared relative error is
\begin{align}
    \mathcal E(\rho,\vartheta;\bar\rho_i)
    &\defeq\frac{\norm{x-\widehat{x}_s}^2}{\norm{x}^2} \nonumber\\
    &=1+\left(\frac{\bar\rho_i}{\rho}\right)^2
      -2\frac{\bar\rho_i}{\rho}\cos(\vartheta).
\end{align}
This expression increases with $\vartheta$ and is convex in $1/\rho$.
Consequently, its worst-case value over the closed sector containing
$C_s$ occurs at $\vartheta=\Theta$ and at one of the radial endpoints.
The design problem therefore reduces to
\begin{align}
    \min_{\bar\rho_i>0}\max\left\{
      \mathcal E(\rho_{i-1},\Theta;\bar\rho_i),
      \mathcal E(\rho_i,\Theta;\bar\rho_i)
    \right\}.
\end{align}
The minimax choice equates the two endpoint errors which gives
\begin{align}
    1+\frac{\bar\rho_i^2}{\rho_{i-1}^2}
      -\frac{2\bar\rho_i\cos(\Theta)}{\rho_{i-1}}
    =1+\frac{\bar\rho_i^2}{\rho_i^2}
      -\frac{2\bar\rho_i\cos(\Theta)}{\rho_i}.
\end{align}
The positive solution for this is
\begin{align}
    \bar\rho_i
    &=\frac{2\rho_{i-1}\rho_i}{\rho_{i-1}+\rho_i}\cos(\Theta)
    \nonumber\\
    &=\cos(\Theta)\operatorname{sech}(h)
      \sqrt{\rho_{i-1}\rho_i} \nonumber\\
    &=\sqrt{1-\delta_0^2}\sqrt{\rho_{i-1}\rho_i}.
    \label{eq:polar_optimal_radius}
\end{align}
The second equality uses the logarithmic spacing, and the third follows
from \eqref{eq:polar_delta}.  To see that it is minimax, note that
the endpoint error at $\rho_k$, $k\in\{i-1,i\}$, is minimized at
$\bar\rho_i=\rho_k\cos(\Theta)$, while
\begin{align}
    \rho_{i-1}\cos(\Theta)
    <\bar\rho_i
    <\rho_i\cos(\Theta).
\end{align}
Thus the outer-endpoint error decreases up to the balancing point, and
the inner-endpoint error increases after it.

It remains to evaluate its worst-case error.  For every $x\in C_s$,
\begin{align}
    \frac{\norm{x-\widehat{x}_s}^2}{\norm{x}^2}
    &\leq \mathcal E(\rho_{i-1},\Theta;\bar\rho_i) \nonumber\\
    &=1-\cos^2(\Theta)\operatorname{sech}^2(h) \nonumber\\
    &=\sin^2(\Theta)+\tanh^2(h)\cos^2(\Theta) \nonumber\\
    &=\delta_0^2.
\end{align}
Taking square roots proves \eqref{eq:polar_error}.

We next construct the decoded uncertainty ball used to initialize the
subsequent primitive quantizer.  Define the relative-error region
\begin{align}
    \mathcal D_s
    \defeq
    \{y\in\R^d:\norm{y-\widehat{x}_s}\leq\delta_0\norm{y}\}.
\end{align}
Since $\delta_0<1$, squaring the inequality defining $\mathcal D_s$ and
expanding gives
\begin{align}
    (1-\delta_0^2)\norm{y}^2
    -2y^\ast\widehat{x}_s+\norm{\widehat{x}_s}^2\leq0.
\end{align}
Dividing by $1-\delta_0^2$ and completing the square give
\begin{align}
    \norm{y-\frac{\widehat{x}_s}{1-\delta_0^2}}^2
    \leq
    \delta_0^2
    \norm{\frac{\widehat{x}_s}{1-\delta_0^2}}^2.
\end{align}
Define
\begin{align}
    c_s&\defeq\frac{\widehat{x}_s}{1-\delta_0^2},
    &r_s&\defeq\delta_0\norm{c_s}.
    \label{eq:polar_ball_parameters}
\end{align}
By~\eqref{eq:polar_error}, the decoded cell satisfies
\begin{align}
    C_s\subseteq\mathcal D_s
    &=\{y\in\R^d:\norm{y-c_s}\leq r_s\}.
    \label{eq:polar_decoded_ball}
\end{align}

When $d=1$, $\Theta=0$, $v_j\in\{-1,+1\}$, and
$\delta_0=\tanh(h)$.  On a positive cell,
\begin{align}
    \widehat{x}_i=\bar\rho_i
    &=\frac{2\rho_{i-1}\rho_i}{\rho_{i-1}+\rho_i}, \nonumber\\
    c_i&=\frac{\widehat{x}_i}{1-\delta_0^2}
        =\frac{\rho_{i-1}+\rho_i}{2}, \nonumber\\
    r_i&=\delta_0c_i
        =\frac{\rho_i-\rho_{i-1}}{2}.
\end{align}
The negative-cell reconstruction and center follow by reflection, while
the radius remains positive.  If all $L=2N_r$ symbols are used, then
$h=T/L$ and $\delta_0=\tanh(T/L)$.

\ifachproofinappendix
\section{Proof of Theorem~\ref{th:achievability}}
\label{ap:achievability_proof}
\AchievabilityProof
\fi

\section{Rate Allocation in the Limiting Regimes}
\label{ap:high_rate_allocation}
We first derive~\eqref{eq:tracking_rate_allocation}.  Relax the integer
constraint on $L_i$ and write $R_i=\log_2L_i\geq0$.  For any prescribed
$\theta>0$, the condition
\begin{align}
    \max_i |\lambda_i(A)|2^{-R_i}\leq \theta
\end{align}
holds if and only if
\begin{align}
    R_i\geq
    \left[\log_2\frac{|\lambda_i(A)|}{\theta}\right]_+,
    \qquad i=1,\ldots,d.
\end{align}
The minimum total rate for this value of $\theta$ is therefore
\begin{align}
    \sum_{i=1}^d R_i
    =\sum_{i=1}^d
    \left[\log_2\frac{|\lambda_i(A)|}{\theta}\right]_+.
\end{align}
This proves~\eqref{eq:tracking_rate_allocation}.  Letting $\theta\uparrow1$
gives the strict data-rate
threshold~\eqref{eq:stabilization_recovered}.  If $A$ is nonsingular and
$R$ is large enough that all coordinates are active, then
\begin{align}
    R
    &=\sum_{i=1}^d\log_2|\lambda_i(A)|-d\log_2\theta,
\end{align}
and hence
\begin{align}
    \theta=|\det A|^{1/d}2^{-R/d}.
    \label{eq:appendix_theta_high_rate}
\end{align}

We next analyze the time-zero polar allocation. For fixed $d\geq2$, choose
the angular codebook to be a cardinality-minimal geodesic $\Theta$-cover of
$S^{d-1}$. For $d\geq3$, the equal-cap covering-density bound and cap-measure
estimates in~\cite[Thm.~2.2 and Lem.~3.4]{naszodi_2016_covering} imply the
following scaling. For $d=2$, the same conclusion follows from the exact circle
covering number $N_a=\lceil\pi/\Theta\rceil$. Thus, there exist constants
$0<c_d\leq C_d<\infty$ such that
\begin{align}
    c_d\Theta^{-(d-1)}
    \leq N_a\leq
    C_d\Theta^{-(d-1)},
    \qquad \Theta\downarrow0,
    \label{eq:spherical_cover_scaling}
\end{align}

Let $L=2^R$ and choose $N_r=\lfloor L/N_a\rfloor$. This uses at most $L$
symbols. For sufficiently large $L$, the allocation below
and~\eqref{eq:spherical_cover_scaling} give
\begin{align}
    \frac{L\Theta^{d-1}}{2C_d}
    \leq N_r\leq\frac{L\Theta^{d-1}}{c_d}.
    \label{eq:radial_count_bounds}
\end{align}
With $h=T/(2N_r)$,~\eqref{eq:polar_delta} gives, for sufficiently small
$\Theta$ and $h$,
\begin{align}
    \frac{\Theta^2+h^2}{2}
    \leq\delta_0^2\leq\Theta^2+h^2.
    \label{eq:polar_high_rate_balance}
\end{align}
Balancing the radial and angular terms, choose $\Theta=L^{-1/d}$.
Equations~\eqref{eq:spherical_cover_scaling} and~\eqref{eq:radial_count_bounds}
then give
\begin{align}
    \Theta&=L^{-1/d},\nonumber\\
    \frac{L^{1/d}}{2C_d}&\leq N_r\leq\frac{L^{1/d}}{c_d},\nonumber\\
    c_dL^{(d-1)/d}&\leq N_a\leq C_dL^{(d-1)/d}.
    \label{eq:polar_high_rate_allocation}
\end{align}
As $R\to\infty$, $L\to\infty$, so~\eqref{eq:polar_high_rate_allocation}
gives $\Theta\to0$ and $N_r\to\infty$. Hence $h=T/(2N_r)\to0$,
and~\eqref{eq:polar_delta} implies $\delta_0\to0$.
The bounds also show that $\delta_0^2$ lies between fixed positive
multiples of $L^{-2/d}=2^{-2R/d}$.
To obtain~\eqref{eq:vector_high_rate_scaling}, consider the high-performance
limit $\lambda_{\min}(H_\gamma)\downarrow0$ with $\norm{M}>0$.
Condition~\eqref{eq:rate_main} forces $\delta_0\to0$, hence $R\to\infty$;
under~\eqref{eq:appendix_theta_high_rate}, $\theta\to0$ as well.
In this high-rate regime, the left-hand side of~\eqref{eq:rate_main}
satisfies
\begin{align}
    \delta_0^2\left[
    \norm{M}+
    \frac{d\norm{M}\norm{\Phi_A^{-1}}^2\theta^2}
    {(1-\delta_0)^2(1-\theta^2)}
    \right]
    &=\norm{M}\delta_0^2\left(1+O(\theta^2)\right).
\end{align}
Taking base-2 logarithms of the bounds obtained from~\eqref{eq:rate_main}
using~\eqref{eq:polar_high_rate_balance} and~\eqref{eq:polar_high_rate_allocation}
gives the rate threshold~\eqref{eq:vector_high_rate_scaling} in this limit: fixed multiplicative
constants contribute only to the additive $O(1)$ term.
For any fixed $H_\gamma\succ0$,~\eqref{eq:rate_main} holds for sufficiently large $R$.
The time-zero error therefore determines the leading term of the high-rate bound.

\section{Additional Numerical Study}
\label{ap:dynamic_range_sweep}
We repeat the comparison at $A=4$, $L=12$, and $N=50$, keeping $\beta=2$
and varying $\kappa_0\defeq\beta/\alpha$ from $2$ to $100$. All cost weights
match Figure~\ref{fig:num_simulation}, with the corresponding stationary
gain $K$ and terminal weight $P_f=P$. The Fu--Xie logarithmic spacing,
zoom factors, and initial scale are retuned at each $\kappa_0$, using the
same stability checks and dense initial-state sampling as in
Figure~\ref{fig:num_simulation}. Figure~\ref{fig:dynamic_range_sweep} shows
that the proposed scheme achieves a lower gain than both comparison
schemes throughout the tested range.

\begin{center}
\begin{minipage}{0.92\linewidth}
    \centering
    \includegraphics[width=\linewidth]{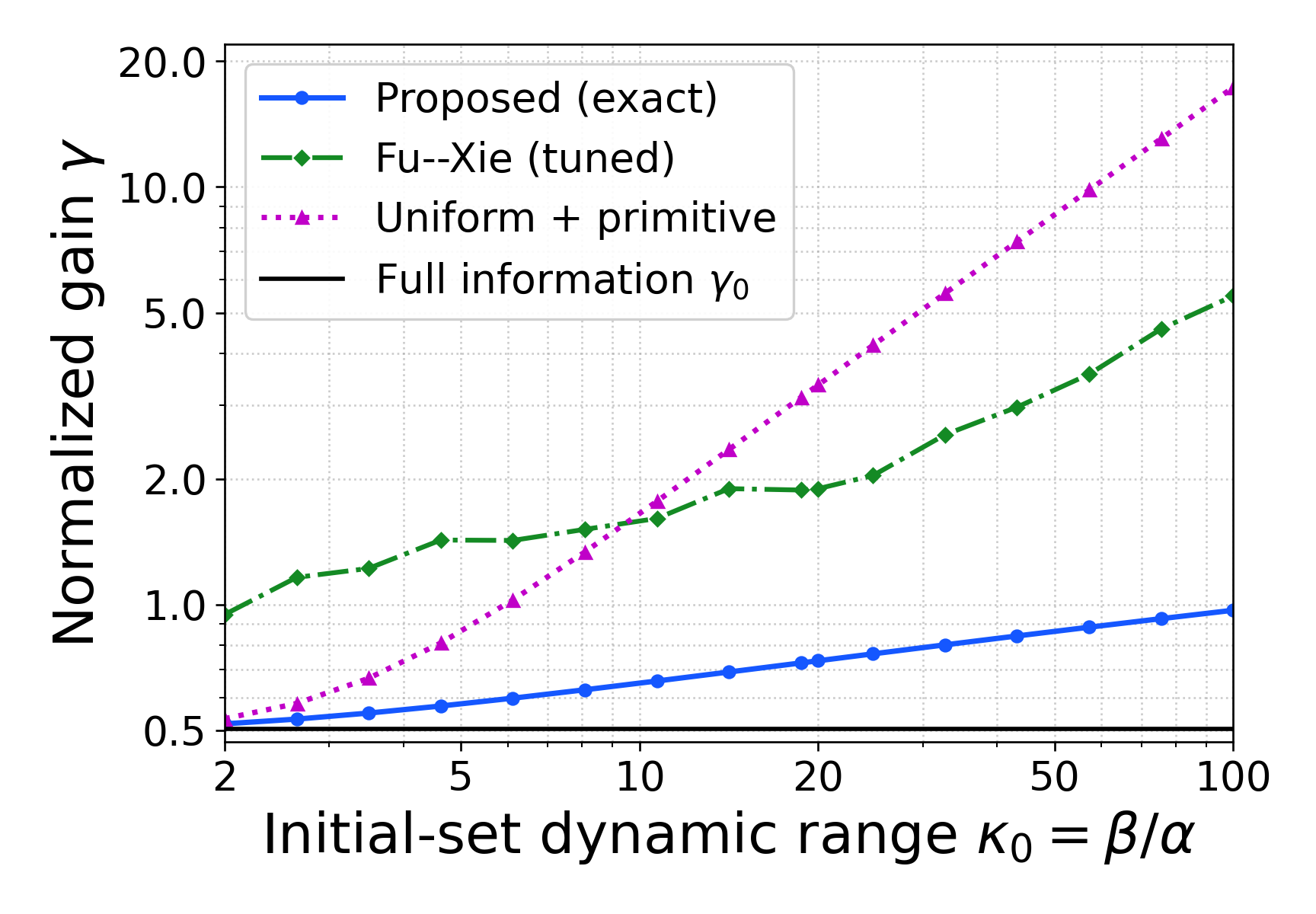}
    \captionsetup{type=figure}
    \captionof{figure}{Normalized gain versus the initial-set dynamic range at
    $L=12$ ($R\approx3.585$ bits/sample) and $N=50$. Gains are computed exactly for the
    proposed and uniform schemes and estimated by sampling for the tuned,
    stability-screened Fu--Xie designs. The full-information value
    $\gamma_0$ is shown for reference.}
    \label{fig:dynamic_range_sweep}
\end{minipage}
\end{center}

\end{document}